\documentclass[11pt,reqno,oneside]{amsart}

\usepackage[letterpaper,hmargin=1.3in,vmargin=1.3in]{geometry}


\usepackage{amsmath}
\usepackage{amsthm}
\usepackage{amssymb}
\usepackage{amsfonts}
\usepackage{amsmath}

\pagestyle{plain}

\usepackage{tikz}
\usetikzlibrary{matrix,arrows,cd,calc}


\usepackage[hidelinks]{hyperref}
\usepackage{xcolor}
\definecolor{darkgreen}{rgb}{0,0.45,0}
\hypersetup{colorlinks,urlcolor=blue,citecolor=darkgreen,linkcolor=darkgreen,linktocpage}

\usepackage[foot]{amsaddr}

\numberwithin{equation}{section}

\usepackage{xspace}
\usepackage{mathtools}
\usepackage[capitalize]{cleveref}

\theoremstyle{plain}
\newtheorem{theorem}{Theorem}[section]

\newtheorem{proposition}[theorem]{Proposition}
\newtheorem{lemma}[theorem]{Lemma}
\newtheorem{corollary}[theorem]{Corollary}

\theoremstyle{definition}
\newtheorem{definition}[theorem]{Definition}

\newtheorem{example}[theorem]{Example}

\newtheorem{question}[theorem]{Question}

\newcommand{\Z}{\mathbb{Z}}
\newcommand{\N}{\mathbb{N}}
\newcommand{\UU}{\mathcal{U}}
\newcommand{\idfunc}[1]{\mathsf{id}_{#1}}
\newcommand{\idb}[1]{\mathsf{idb}_{#1}}
\newcommand{\rec}{\mathsf{rec}}
\newcommand{\End}{\mathsf{end}}
\newcommand{\biinv}{\mathsf{biInv}}
\newcommand{\refl}[1]{\ensuremath{\mathsf{refl}_{#1}}\xspace}
\newcommand{\isBiInv}{\mathsf{isBiInv}}
\newcommand{\prbi}{\mathsf{prBiInv}}

\newcommand{\isSet}{\mathsf{isSet}}
\newcommand{\isProp}{\mathsf{isProp}}
\newcommand{\isContr}{\mathsf{isContr}}
\newcommand{\nf}{\mathsf{nf}}
\newcommand{\ap}[1]{\mathsf{ap}_{#1}}
\newcommand*\sq{\mathbin{\vcenter{\hbox{\rule{.3ex}{.3ex}}}}}
\newcommand{\toEq}{\mathsf{toEq}}
\newcommand{\toBiInv}{\mathsf{toBiInv}}
\newcommand{\El}{\mathsf{El}}
\newcommand{\Map}{\mathsf{Map}}
\newcommand{\bool}{\mathsf{bool}}
\newcommand{\ed}{\mathsf{ed}}
\newcommand{\succc}{\mathsf{succ}}
\newcommand{\pred}{\mathsf{pred}}
\newcommand{\secc}{\mathsf{sec}}
\newcommand{\ret}{\mathsf{ret}}
\newcommand{\coh}{\mathsf{coh}}

\newcommand{\strpos}{\mathsf{strpos}}
\newcommand{\strneg}{\mathsf{strneg}}

\newcommand{\defeq}{\vcentcolon\equiv}  


\title{The Integers as a Higher Inductive Type}
\author{Thorsten Altenkirch}
\author{Luis Scoccola}

\date{}

\begin{document}

\begin{abstract}
  We consider the problem of defining the integers in Homotopy Type
  Theory (HoTT). We can define the type of integers as signed natural numbers
  (i.e., using a coproduct), but its induction principle is very inconvenient to work with,
  since it leads to an explosion of cases. An alternative is to use
  set-quotients, but here we need to use set-truncation to avoid
  non-trivial higher equalities. This results in a recursion principle
  that only allows us to define function into sets (types satisfying UIP).
  In this paper we consider higher
  inductive types using either a small universe or bi-invertible maps.
  These types represent integers without explicit set-truncation that
  are equivalent to the usual coproduct representation.
  This is an interesting example since it shows how some coherence
  problems can be handled in HoTT. We discuss some open questions
  triggered by this work.
  The proofs have been formally verified using cubical Agda.
\end{abstract}

\maketitle


\section{Introduction}
\label{sec:introduction}

How to define the integers in Homotopy Type Theory? This can sound like a trivial
question. The first answer is as signed natural numbers:
\begin{definition}\label{Z-w}
Let $\Z_w$ be the inductive type generated by the following constructors:
  \begin{itemize}
  \item[--] $0 : \Z_w$
  \item[--] $\strpos : \N \to \Z_w$
  \item[--] $\strneg : \N \to \Z_w$
  \end{itemize}
\end{definition}
However, this type is very inconvenient in practice because it creates
a lot of unnecessary case distinctions. Nuo \cite{nuo-phd} tried to
prove distributivity of multiplication over addition, which
resulted in a lot of cases. It is like working with normal forms only,
when working with $\lambda$-terms.

Nuo shows that it is much better to work with a quotient type, 
representing integers as differences of natural
numbers. That is, we define $\Z_q = \N \times \N / \sim$ where 
$(x^+,x^-) \sim (y^+,y^-)$ is defined as $x^+ + y^- = y^+ + x^-$
\footnote{This is actually the definition in \cite{hottbook}.}.
However, this is not the end of the story. Here we use
set-quotients, which can be implemented as a higher inductive type with a
set-truncation constructor \cite[Section~6.10]{hottbook}.
However, the set-truncation constructor implies that using its recursion principle we can only define
functions into sets, which seems to be an unreasonable limitation when
working in HoTT. For example, in the proof that the loop space of the
circle is isomorphic to the integers \cite{licatashulman}, we must map from the integers to the loop space
of the circle, when we do not yet know that this will end up being a set.

We would like to have a definition of the integers which is convenient
to work with (i.e., does not reduce them to normal forms) but which is
not forced to be set-truncated by a set-truncation constructor.
Paolo Capriotti suggested the following definition:
\begin{definition}\label{Z-h}
Let $\Z_h$ be the higher inductive type with the following constructors:
\begin{itemize}
    \item[--] $0 : \Z_h$;
    \item[--] $\succc : \Z_h \to \Z_h$;
    \item[--] $\pred : \Z_h \to \Z_h$;
    \item[--] $\secc : (z : \Z_h) \to \pred(\succc(z)) = z$;
    \item[--] $\ret : (z : \Z_h) \to \succc(\pred(z)) = z$;
    \item[--] $\coh : (z : \Z_h) \to \ap{\succc}(\secc(z)) = \ret(\succc(z))$.
\end{itemize}
\end{definition}
We add $\succc$ and $\pred$ as constructors, but then we postulate that they
are inverse to each other using $\secc$ and $\ret$. At this point we could add
a set-truncation but then we would suffer from the same shortcoming as
the definition using a set-quotient. However, we can add just one
coherence condition $\coh$ which should look familiar to anybody
who has read the HoTT book: indeed the constructors $\pred$, $\secc$, $\ret$, and $\coh$
exactly say that $\succc$ is a half-adjoint equivalence \cite[Section~4.2]{hottbook}. 
More precisely, $\secc$ postulates that $\succc$ is a section,
$\ret$ postulates that $\succc$ is a retraction, and $\coh$ represents the
triangle identity in the definition of half-adjoint equivalence.

The question that now remains is the following. Is $\Z_h$ a correct definition of the
integers, in particular is it a set with decidable equality? The
strategy to prove this is to define a normalisation function into the
signed integers, $\Z_w$, and show that this normalisation function,
together with the obvious embedding of $\Z_w$ into $\Z_h$, forms an equivalence.
It turns out that this is actually quite hard to prove, due to the presence of
higher equalities, and nobody has so far been able to formally verify this.

In this paper, we follow the same idea but use a \emph{simpler}
definition of equivalence, namely bi-invertible maps \cite[Section~4.3]{hottbook}:
\begin{definition}\label{Z-b}
Let $\Z_b$ be the higher inductive type with the following constructors:
\begin{itemize}
    \item[--] $0 : \Z_b$;
    \item[--] $\succc : \Z_b \to \Z_b$;
    \item[--] $\pred_1 : \Z_b \to \Z_b$;
    \item[--] $\pred_2 : \Z_b \to \Z_b$;
    \item[--] $\secc : (z : \Z_b) \to \pred_1(\succc(z)) = z$;
    \item[--] $\ret : (z : \Z_b) \to \succc(\pred_2(z)) = z$;
\end{itemize}
\end{definition}
In this case we postulate that $\succc$ has a left inverse, given by
$\pred_1$ and $\secc$, and a right inverse, given by $\pred_2$ and $\ret$.
The reason why $\Z_b$ is simpler than $\Z_h$ is because it only has
$0$- and $1$-dimensional constructors. The higher coherence $\coh$ is not needed
in this case for the same reason that a $2$-dimensional constructor is not needed
in the definition of bi-invertible map: having two, a priory, unrelated
inverses makes the type of witnesses that a certain map is bi-invertible a proposition
(\cite[Theorem~4.3.2]{hottbook}).

For this definition we can give a complete
proof that $\Z_b$ is equivalent to $\Z_w$, which has been formalized in
cubical Agda. We remark that this has previously been verified by Evan Cavallo \cite{cavallo} in
RedTT \cite{redtt}. However, our approach to prove the equivalence is more general.
Our main result is \cref{enough}, which says that only the components witnessing the
preservation of $0$ and $\succc$ are relevant when comparing morphisms out of $\Z_b$.

Another presentation of the integers follows from directly
implementing the idea that the integers can be specified as the initial type
with an inhabitant and an equivalence:
\newpage
\begin{itemize}
    \item[--] $0 : \Z_U$;
    \item[--] $s : \Z_U = \Z_U$.
\end{itemize}
The problem is that this is not a standard definition of a higher inductive type
because we state an equality of the type itself. However, this can be
fixed by using a small universe: 

\begin{definition}
\label{Z-u}
  Define $U:\UU$ and $\El : U \to \UU$ inductively with the
  constructors:
  \begin{itemize}
  \item[--] $z : U$;
  \item[--] $q : z = z$;
  \item[--] $0 : \El(z)$
  \end{itemize}
  Now, let $\Z_U \defeq \El(z)$.
\end{definition}
While we can show that this is
a set without using set-truncation, its recursion principle isn't directly amenable to recursive
definitions of functions because even $\succc$ is not a constructor. On
the other hand the fact that the integers are the loop space of the
circle is a rather easy consequence of this definition. 

The definition of the integers is also closely related to the free
group, indeed as suggested in \cite{free-group} we can define the free
group over a type $A$ by simply parametrizing all the
constructors but $0$:
\begin{definition}
\label{FG}
  Given $A:\UU$, define $\mathbf{F}(A)$ inductively with the
  constructors:
\begin{itemize}
    \item[--] $0 : \mathbf{F}(A)$;
    \item[--] $\succc : A \to \mathbf{F}(A) \to \mathbf{F}(A)$;
    \item[--] $\pred_1 : A \to \mathbf{F}(A) \to \mathbf{F}(A)$;
    \item[--] $\pred_2 : A \to \mathbf{F}(A) \to \mathbf{F}(A)$;
    \item[--] $\secc : (a : A) \to (z : \mathbf{F}(A)) \to \pred_1(a,\succc(a,z)) = z$;
    \item[--] $\ret : (a : A) \to (z : \mathbf{F}(A))\to \succc(a,\pred_2(a,z)) = z$;
\end{itemize}
\end{definition}
The integers arise as the special case $\Z =
\mathbf{F}(\mathbf{1})$. However, the normal forms get a bit more
complicated because we must allow alternating sequences of $\succc$ and
$\pred$ but only for different $a:A$. This means that a normalisation
function is only definable for sets $a:A$ with a decidable
equality. The general problem of whether $\mathbf{F}(A)$ is a set, if $A$ is, is
still open --- in \cite{free-group} it is shown to be the
case, if we $1$-truncate the HIT. 

The problem of defining the integers with convenient constructors, and adding only
the right coherences to make it a set, can be seen as a simple instance of a
more general class of coherence problems in HoTT. Another example that we have in mind is the intrinsic definition of the syntax of
type theory as the initial category with families as developed in
\cite{TTinTT}. If we carry out this definition in HoTT, we need to
set-truncate the syntax, but this stops us from interpreting the
syntax in the standard model formed by sets. We hope that also in this case we can
add the correct coherence laws and show that they are sufficient to deduce
that the initial algebra is a set.

\subsection{Contributions}
\label{sec:contributions}

We show that the definitions of the signed integers, $\Z_w$, the
definition of the integers as a higher inductive type using bi-invertible maps, $\Z_b$,
and the definition using a higher inductive-inductive type with a mini universe, $\Z_U$, are all
equivalent (\cref{Zb-equiv-Zw}).

For $\Z_b$ we establish some
useful principles such as a recursion principle (\cref{mapout})
which only uses one predecessor, and an induction principle which says
that to prove a predicate (i.e., a family of propositions), you
only need to prove closure under $0$, $\succc$, and $\pred_1$
(\cref{proveprop}). This is sufficient to verify all algebraic
properties of the integers, e.g., that the integers form a commutative ring. We
have formalized \cite{code} the constructions using cubical Agda \cite{cubical-agda}.

When formalizing the constructions involving $\Z_b$ we developed the
theory of bi-invertible
maps in cubical Agda, which wasn't available.
In particular, we prove that bi-inverti\-ble maps are equivalent to contractible-fibers maps
\cite[Section~4.4]{hottbook}, and the principle of equivalence induction for bi-invertible maps.

\subsection{Related work}
\label{sec:related-work}

The claim that $\Z_h$ is a set can be found in \cite{pinyo-types} but
the proof was flawed: it relies on the assumption that we can ignore
propositional parts of an algebra for a certain signature when constructing algebra morphisms,
which is not the case in general (\cref{counterexample}). Cavallo \cite{cavallo} verified that
$\Z_b \simeq \Z_w$ in RedTT. Higher inductive representations of
the integers are discussed in \cite{BasoldEtAl} and it is shown there that
$\Z_h$ without the last constructor is not a
set. \cite{nicolai-path-19} also discuss \cite{pinyo-types} and note
that it is a corollary of their higher Seifert-van Kampen
theorem --- however, they derive it from initiality not from the
induction principle.

\subsection{Background}
\label{background}

We use Homotopy Type Theory as presented in the book
\cite{hottbook}. We adopt the following notational conventions.

If two terms $a$ and $b$ are definitionally equal, we write $a \equiv b$,
and we reserve $a = b$ to denote the type of propositional equalities between $a$ and $b$.

Given a type $A : \UU$ and a type family $P : A \to \UU$, we write the corresponding
$\Pi$-type as $(a : A) \to P(a)$, and the corresponding $\Sigma$-type as
$(a : A) \times P(a)$.

Given a type $A : \UU$, a type family $P : A \to \UU$, an equality $e : a = b$ in $A$,
and $p : P(a)$, we denote the coercion of $p$ along $e$ by $e_*(p) : P(b)$.
This is defined by induction on $e$.

A type is contractible if it has exactly one inhabitant. That is,
given a type $A : \UU$, we define $\isContr(A) \defeq (a_0 : A) \times \left((a : A) \to a = a_0\right)$.
A type is a proposition if any two inhabitants are equal. That is,
given a type $A : \UU$, we define $\isProp(A) \defeq (a,b : A) \to a = b$, \cite[Definition~3.3.1]{hottbook}.
A type is a set if it satisfies UIP. That is, given a type $A : \UU$,
we define $\isSet(A) \defeq (a,b : A) \to (p, q : a = b) \to p = q$, \cite[Definition~3.1.1]{hottbook}.

An equivalence between types $A$ and $B$ is a map $f : A \to B$ together
with a proof that $(b : B) \to \isContr( (a : A) \times f(a) = b)$.
The type of equivalences between $A$ and $B$ is denoted by $A \simeq B$.

The general syntax of Higher Inductive Inductive Types (HIITs) is
specified in \cite{ambrus-andras}, where also the types of the
eliminators are derived.
In the informal exposition, and in the formalisation, we use
the cubical approach to path algebra introduced in \cite{licata-brunerie}.

For the formalisation we use cubical Agda \cite{cubical-agda} which is
based on the cubical type theory of \cite{cubical}. The development of HIITs
in Agda is based on \cite{cubical-hits}.

\section{Representing $\Z$ using bi-invertible maps}
\label{sec:usingbi}
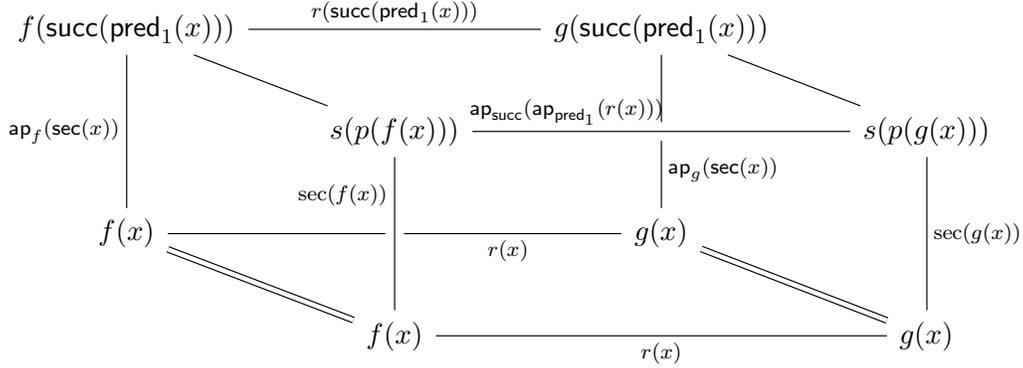
\begin{figure*}[!ht]
\[\begin{tikzcd}
 f(\succc(\pred_1(x))) \arrow[dd,dash,"\ap{f}(\secc(x))" left]
 \arrow[rr,dash,"r(\succc(\pred_1(x)))"]  
 \arrow[dr,dash]
 & & g(\succc(\pred_1(x))) \arrow[dd,near end,dash,"\ap{g}(\secc(x))"]
 \arrow[dr,dash]\\
 & s(p(f(x))) \arrow[rr,near start,dash,crossing over,"\ap{\succc}(\ap{\pred_1}(r(x)))"]  
& & s(p(g(x))) \arrow[dd,dash,"\sec(g(x))" right]\\
 f(x) \arrow[rr, near end,dash, "r(x)" below] 
 \arrow[dr,equal]
&& g(x) \arrow[dr,equal]\\
 & f(x) \arrow[rr,dash,"r(x)" below]  
\arrow[from=uu,dash,near start,crossing over, "\sec(f(x))" left]
&& g(x)
\end{tikzcd}
\]  
\caption{\label{cube}Cube needed for lemma \ref{uniquenessZb}}
\end{figure*}
The type $\N$ of natural numbers is usually defined as the inductive type generated by an inhabitant $0 : \N$ and
an endomap $\succc : \N \to \N$. In this section, we define the integers $\Z_b$ in a similar way. The idea is to give
constructors that guarantee that we have $0 : \Z_b$, $\succc : \Z_b
\to \Z_b$, and that $\succc$ is an equivalence using bi-invertible
maps, see \cref{Z-b}.
To make it easy to work with this definition, we prove three theorems that let us: map out of $\Z_b$ (\cref{mapout}), prove properties
about $\Z_b$ (\cref{proveprop}), and recognise when two maps out of $\Z_b$ are equal (\cref{enough}).



The result about mapping out of $\Z_b$ is very simple, and follows immediately from the recursion principle of $\Z_b$.

\begin{proposition}[{\texttt{rec}$\Z$\texttt{simp}}]
    \label{mapout}
Given a type $T$ with an inhabitant $t : T$ and two maps $f : T \to T$, $g : T \to T$, such that
$g$ is a left and right inverse of $f$, we get a map $r : \Z_b \to T$ such that
$r(0) \equiv t$ and $r(\succc(z)) \equiv f(r(z))$, definitionally.
\end{proposition}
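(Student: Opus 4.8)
The plan is to apply the recursion principle of $\Z_b$ directly, choosing the images of \emph{both} predecessor constructors to be the single map $g$. Recall that the non-dependent eliminator for $\Z_b$, following the general scheme for HIITs from \cite{ambrus-andras}, asks us to supply: a point $t : T$ for $0$; endomaps $f, p_1, p_2 : T \to T$ for $\succc$, $\pred_1$, $\pred_2$; and two families of paths, namely $(x : T) \to p_1(f(x)) = x$ for $\secc$ and $(x : T) \to f(p_2(x)) = x$ for $\ret$. It then returns a map $r : \Z_b \to T$ whose point- and map-computation rules read $r(0) \equiv t$, $r(\succc(z)) \equiv f(r(z))$, $r(\pred_1(z)) \equiv p_1(r(z))$ and $r(\pred_2(z)) \equiv p_2(r(z))$, all holding definitionally.

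First I would unpack the hypothesis that $g$ is a left and right inverse of $f$. By definition this furnishes two families of paths: a witness $\sigma : (x : T) \to g(f(x)) = x$ that $g$ is a left inverse of $f$, and a witness $\rho : (x : T) \to f(g(x)) = x$ that $g$ is a right inverse of $f$. The key observation is that these have exactly the types demanded by the eliminator for the $\secc$- and $\ret$-constructors once we set $p_1 \defeq g$ and $p_2 \defeq g$: indeed $\sigma(x) : g(f(x)) = x$ is precisely a path $p_1(f(x)) = x$, and $\rho(x) : f(g(x)) = x$ is precisely a path $f(p_2(x)) = x$.

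I would then instantiate the recursion principle with the data $t$, $f$, $p_1 \defeq g$, $p_2 \defeq g$, section data $\sigma$, and retraction data $\rho$, obtaining $r : \Z_b \to T$. The two computation rules claimed in the statement, $r(0) \equiv t$ and $r(\succc(z)) \equiv f(r(z))$, are exactly the point- and $\succc$-computation rules of this eliminator, and hence hold definitionally as required.

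There is no substantial obstacle here; the only point worth noticing is that although the eliminator of $\Z_b$ nominally requires two distinct predecessors together with separate section and retraction data, nothing forces these to be different, so we may reuse the same $g$ for both. This is exactly the sense in which a bi-invertible structure with a single common inverse is a special case of one with two a priori distinct inverses, and it is what makes the resulting recursion principle convenient: downstream one only ever needs to exhibit a single inverse $g$ rather than two separate predecessors.
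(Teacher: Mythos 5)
Your proof is correct and matches the paper's intent exactly: the paper gives no separate argument, remarking only that the result ``follows immediately from the recursion principle of $\Z_b$,'' which is precisely your instantiation with $\pred_1 \defeq g$, $\pred_2 \defeq g$, and the left/right-inverse witnesses filling the $\secc$ and $\ret$ clauses. Your closing observation---that nothing forces the two predecessors to be distinct, so a single inverse suffices---is exactly the point of the proposition.
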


The next result is only slightly more involved.

\begin{proposition}[{\texttt{ind}$\Z$\texttt{simp}}]
    \label{proveprop}
    Given a type family $P : \Z_b \to \UU$ such that $(z : \Z_b) \to \isProp(P(z))$,
    if we have $P(0)$, $(z : \Z_b) \to P(z) \to P(\succc(z))$, and $(z : \Z_b) \to P(z) \to P(\pred_1(z))$,
    then it follows that $(z : \Z_b) \to P(z)$.
\end{proposition}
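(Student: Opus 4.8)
The plan is to apply the full dependent eliminator (induction principle) of the higher inductive type $\Z_b$ to the family $P$, and to use the assumption that each $P(z)$ is a proposition to discharge, for free, every piece of data except the ones coming from the point constructors $0$, $\succc$, and $\pred_1$. Recall what the eliminator demands in order to build a section $(z : \Z_b) \to P(z)$: an element of $P(0)$; step functions of type $(z : \Z_b) \to P(z) \to P(c(z))$ for each of $c \in \{\succc, \pred_1, \pred_2\}$; and, for each path constructor $\secc$ and $\ret$, a dependent path over $\secc(z)$ (resp.\ $\ret(z)$) relating the appropriate composites of the step functions. Three of the point-level inputs are exactly our hypotheses, so only two things are missing: a step function over $\pred_2$, and the two dependent-path obligations.

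The dependent-path obligations are automatic. If $P$ is a family of propositions, then for any path $p : a = b$ in $\Z_b$ and any $u : P(a)$, $v : P(b)$, the type of dependent paths over $p$ from $u$ to $v$ is equivalent to the path type $p_*(u) = v$ in $P(b)$. Since propositions are sets, this path type is itself a proposition; and it is inhabited because $P(b)$ is a proposition, so any two of its elements are equal. Hence the dependent-path data required by $\secc$ and $\ret$ can always be produced, regardless of how the step functions are chosen.

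It remains to manufacture the $\pred_2$ step from the given $\pred_1$ step. The key observation is that the two predecessors agree pointwise: for every $z : \Z_b$ there is a path $e_z : \pred_1(z) = \pred_2(z)$, by the standard argument that a left inverse and a right inverse of a map coincide. Concretely, $\ret(z) : \succc(\pred_2(z)) = z$ gives $\ap{\pred_1}(\ret(z)) : \pred_1(\succc(\pred_2(z))) = \pred_1(z)$, while $\secc(\pred_2(z)) : \pred_1(\succc(\pred_2(z))) = \pred_2(z)$; concatenating these (inverting the first) yields $e_z$. Writing $h : (z : \Z_b) \to P(z) \to P(\pred_1(z))$ for the supplied $\pred_1$ step, I then define the $\pred_2$ step by $(z, u) \mapsto (e_z)_*\bigl(h(z, u)\bigr)$. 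Feeding $P(0)$, the $\succc$ step, the $\pred_1$ step, this derived $\pred_2$ step, and the (free) dependent-path data into the induction principle produces the desired section $(z : \Z_b) \to P(z)$.

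The only genuinely delicate point is verifying that the dependent-path conditions for $\secc$ and $\ret$ really are discharged by propositionality in the cubical setting; everything else is routine path algebra. In practice the formalisation dispatches these fillers directly from the proof that $P$ is a family of propositions, so no explicit computation of the composites appearing in the dependent paths is needed.
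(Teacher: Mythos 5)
Your proposal is correct and follows essentially the same route as the paper: apply the induction principle, note that the path-constructor obligations are vacuous because $P$ is a family of propositions, and derive the $\pred_2$ step by transporting the $\pred_1$ step along the pointwise equality $\pred_1(z) = \pred_2(z)$, obtained from $\secc$ and $\ret$ exactly as in the paper's proof. Your extra care in spelling out why the dependent-path data over $\secc$ and $\ret$ is automatically available is just a more detailed account of what the paper summarizes as ``we do not have to check any coherences.''
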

\begin{proof}
    We use the induction principle of $\Z_b$.
    The main idea is that we do not have to check any coherences, since
    we are proving a proposition. Concretely, this means that we only have to provide inhabitants
    for the following types:
    $P(0)$, $(z : \Z_b) \to P(z) \to P(\succc(z))$, $(z : \Z_b) \to P(z) \to P(\pred_1(z))$,
    and $(z : \Z_b) \to P(z) \to P(\pred_2(z))$.
    For the first three we just use the assumptions.
    For the fourth one, we make use of the fact that, for every $z : \Z_b$, there
    is an equality $\pred_1(z) = \pred_2(z)$.
    This is because $\pred_2(z) = \pred_1(\succc(\pred_2(z))) = \pred_1(z)$ using
    $\secc$ and then $\ret$.
\end{proof}


The result that allows us to compare maps out of $\Z_b$ is considerably more complicated to prove.
In order to explain its proof, we need to talk about bi-invertible maps.

\begin{definition}
    A map between types $f : A \to B$ is a bi-invertible map if
    there exist $g,h : B \to A$, and homotopies $s : g\circ f = \idfunc{A}$ and $r : f\circ h = \idfunc{B}$.

    The type of bi-invertible structures on such a map $f$ is denoted by $\isBiInv(f)$.
    The type $(f : A \to B) \times \isBiInv(f)$ is denoted by $A \simeq_b B$.
\end{definition}

Whenever we have $f : A \simeq_b B$, we will abuse notation, and write $f : A \to B$ for the underlying
function of the bi-invertible map $f$.

Notice that the constructors $\succc$, $\pred_1$, $\pred_2$, $\secc$, and $\ret$ form a bi-invertible map.
Suppose given a type $T$ with an inhabitant $t : T$ and a bi-invertible map $s : T \simeq_b T$.
The recursion principle of $\Z_b$ gives us $\rec_{\Z_b}(T,t,s) : \Z_b \to T$.
Now, assume given another map $f : \Z_b \to T$.
What do we have to check to be able to conclude that $f = \rec_{\Z_b}(T,t,s)$?

The following theorem gives a simple answer to the question and is the main
focus of this section.

\begin{theorem}[\texttt{uniqueness}$\Z$]
    \label{enough}
    Given a type $T$, an inhabitant $t : T$, a bi-invertible map $s : T \simeq_b T$, and a map $f : \Z_b \to T$,
    if $f(0) = t$ and $s \circ f = f \circ \succc$
    then $f = \rec_{\Z_b}(T,t,s)$.
\end{theorem}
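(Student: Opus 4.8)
The plan is to prove $f = g$, where I abbreviate $g \defeq \rec_{\Z_b}(T,t,s)$, by function extensionality: it suffices to build a pointwise homotopy $r : (z : \Z_b) \to f(z) = g(z)$. I would construct $r$ by the full induction principle of $\Z_b$, with motive $P(z) \defeq (f(z) = g(z))$. Since $P$ is a family of identity types and we are not assuming $T$ is a set, $P$ need not be a family of propositions, so \cref{proveprop} does not apply; we genuinely have to supply data for all six constructors of \cref{Z-b}, including the path constructors $\secc$ and $\ret$. The content of the theorem is exactly that the data for $\pred_1,\pred_2,\secc,\ret$ are forced once the $0$- and $\succc$-cases are pinned down.

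The point and successor cases are immediate from the hypotheses. For $0$ we have $g(0) \equiv t$ definitionally and $f(0) = t$ by assumption, so the assumed equality serves as $r(0)$. For $\succc$, given $r(z) : f(z) = g(z)$, I would concatenate the pointwise form $f(\succc(z)) = s(f(z))$ of the hypothesis $s\circ f = f\circ\succc$, the path $\ap{s}(r(z)) : s(f(z)) = s(g(z))$, and the definitional computation $g(\succc(z)) \equiv s(g(z))$, obtaining $f(\succc(z)) = g(\succc(z))$.

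For the two predecessor cases there is no hypothesis relating $f$ to the inverses of $s$, so the required equalities must be derived from the successor hypothesis together with the bi-invertible structure of $s$. Writing $s_1,s_2 : T \to T$ for the inverse maps supplied by $s : T \simeq_b T$ and $\sigma,\rho$ for their section and retraction homotopies, I would build $f(\pred_1(z)) = g(\pred_1(z))$ out of $r(z)$ by applying $s_1$ to $r(z)$ and inserting a derived identification $f(\pred_1(z)) = s_1(f(z))$; the latter is assembled from the successor homotopy, the homotopy $\sigma$, and the path constructors of \cref{Z-b}. The $\pred_2$-case is symmetric, using $s_2$ and $\rho$, and the propositional equality $\pred_1(z) = \pred_2(z)$ from the proof of \cref{proveprop} keeps the two branches compatible.

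The main obstacle is the two path-constructor cases. Because $P$ is valued in identity types, the datum required over $\secc(z) : \pred_1(\succc(z)) = z$ is a dependent path between elements of identity types, i.e.\ a square; once the point, successor, and predecessor cases above are substituted in, this square unfolds into the cube depicted in \cref{cube}, whose faces are instances of $r$, of $\ap{\succc}$ and $\ap{\pred_1}$ applied to $r$, and of the section homotopy on $T$. Filling this cube, and its $\ret$-analogue, by cubical path algebra is the crux of the argument. Crucially, bi-invertibility carries no top-dimensional coherence cell — contrast the constructor $\coh$ of \cref{Z-h} — so no further coherence obligation is generated and the cubes are fillable; this is the precise sense in which only the $0$- and $\succc$-components are relevant when comparing maps out of $\Z_b$.
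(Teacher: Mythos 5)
Your skeleton is the paper's proof of \cref{uniquenessZb} (function extensionality, full induction on $\Z_b$, cube filling for the $\secc$ and $\ret$ cases as in \cref{cube}), but you try to run it directly from the weak hypothesis $s \circ f = f \circ \succc$, and that is where there is a genuine gap. In the paper, \cref{uniquenessZb} assumes the \emph{full} preservation structure $\prbi(\succc,s,f,f)$, and the five side faces of the cube are filled using precisely its $2$-dimensional components $p_s$ and $p_r$, which say that $f$ carries the section and retraction homotopies of $\Z_b$ to those of $T$ coherently with the $1$-dimensional data $p_g,p_h$. You derive pointwise identifications such as $f(\pred_1(z)) = s_1(f(z))$ from the successor hypothesis together with $\secc$, $\ret$, and $\sigma$, which is fine as far as it goes, but these are only the $1$-dimensional components; when you reach the $\secc$ and $\ret$ cases you must fill faces relating your derived predecessor paths to the section/retraction homotopies on both sides, and nothing in your construction provides those $2$-cells. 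Saying ``the cubes are fillable by cubical path algebra'' asserts exactly what has to be proven: Kan filling lets you fill an open box once \emph{all but one} face is filled, and it is the missing side faces, not the final filler, that are the problem.

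Your justification --- that $\isBiInv$ carries no top-dimensional coherence cell, so ``no further coherence obligation is generated'' --- is not a valid principle, and the paper's own \cref{counterexample} refutes it: a map respecting the underlying operations of an algebra need not respect a propositional extension of the signature. For bi-invertible maps the implication \emph{does} hold, but that is a theorem, namely \cref{inductiontoalgebra}, which states that $\prbi(e,e',\alpha,\beta)$ is equivalent to $e' \circ \alpha = \beta \circ e$; the paper proves it by equivalence induction for bi-invertible maps (\cref{BiInduction}), which rests on univalence, precisely because the direct path-algebra proof --- the route you are implicitly attempting inside the cube filling --- is non-trivial. So the correct repair of your argument is the paper's factorization: first upgrade your hypothesis to $\prbi(\succc,s,f,f)$ via \cref{inductiontoalgebra}, then apply \cref{uniquenessZb}, whose cube filling is then legitimate because the side faces come from the preservation data rather than needing to be conjured by hand.
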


In order to prove \cref{enough} we must study the preservation of bi-invertible maps,
which we introduce next.

Fix types $A,B,A',B' : \UU$, bi-invertible maps $e : A \simeq_b B$ and $e' : A' \simeq_b B'$, and
maps $\alpha : A \to A'$ and $\beta : B \to B'$:
\[
\begin{tikzpicture}
  \matrix (m) [matrix of math nodes,row sep=2em,column sep=2em,minimum width=2em,nodes={text height=1.75ex,text depth=0.25ex}]
    { A & B \\
    A' & B'. \\};
  \path[-stealth]
    (m-1-1) edge node [above] {$e$} (m-1-2)
            edge node [left] {$\alpha$} (m-2-1)
    (m-2-1) edge node [above] {$e'$} (m-2-2)
    (m-1-2) edge node [right]{$\beta$} (m-2-2)
    ;
\end{tikzpicture}
\]
We now define what it means for $\alpha$ and $\beta$ to respect $e$ and $e'$.
By a slight abuse of notation, let the bi-invertible maps $e$ and $e'$ be given
by $(e,g,h,s,r)$ and $(e',g',h',s',r')$.

\begin{definition}
    We define the type $\prbi(e,e',\alpha,\beta)$ as the iterated $\Sigma$-type with the following fields:
    \begin{itemize}
        \item[--] (preservation of $e$) $p_e : e' \circ \alpha = \beta \circ e$;
        \item[--] (preservation of $g$) $p_g : g' \circ \beta = \alpha \circ g$;
        \item[--] (preservation of $h$) $p_h : h' \circ \beta = \alpha \circ h$;
        \item[--] (preservation of $s$) $p_s : (a : A) \to s'( \alpha(a)) = \ap{g'}(p_e a) \sq p_g(e(a)) \sq \ap{\alpha}(s(a))$;
        \item[--] (preservation of $r$) $p_r : (b : B) \to r'( \beta(b)) = \ap{e'}(p_h b) \sq p_e(h(b)) \sq \ap{\beta}(r(a))$.
    \end{itemize}
\end{definition}

The next proposition follows from the initiality of $\Z_b$, although it is a bit involved to prove formally using
the constructors and the induction principle.

\begin{proposition}[\texttt{uniqueness}]
    \label{uniquenessZb}
    Suppose given a type $T$ with an inhabitant $t : T$, a bi-invertible map $s : T \simeq_b T$, and a map $f : \Z_b \to T$.
    If $f(0) = t$ and $\prbi(\succc,s,f,f)$, then $f = \rec_{\Z_b}(T,t,s)$.
\end{proposition}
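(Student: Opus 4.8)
The plan is to prove $f = g$, where $g \defeq \rec_{\Z_b}(T,t,s)$, by function extensionality: it suffices to construct a pointwise homotopy $H : (z : \Z_b) \to f(z) = g(z)$, which I build by the induction principle of $\Z_b$. Writing the bi-invertible map $s$ as $(s, g_s, h_s, s_s, r_s)$ (abusing notation for the underlying map), recall that $g$ satisfies the definitional computation rules $g(0) \equiv t$, $g(\succc(z)) \equiv s(g(z))$, $g(\pred_1(z)) \equiv g_s(g(z))$, and $g(\pred_2(z)) \equiv h_s(g(z))$, while the path-constructor computation rules make $\ap{g}(\secc(z))$ compute to $s_s(g(z))$ and $\ap{g}(\ret(z))$ compute to $r_s(g(z))$.

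For the point constructors the definition of $H$ is essentially forced by the preservation data. Unfolding $\prbi(\succc, s, f, f)$ supplies $p_e : s \circ f = f \circ \succc$, $p_g : g_s \circ f = f \circ \pred_1$, $p_h : h_s \circ f = f \circ \pred_2$, together with the coherences $p_s$ and $p_r$. I take $H(0)$ to be the hypothesis $f(0) = t \equiv g(0)$; using $g(\succc(z)) \equiv s(g(z))$ I set $H(\succc(z)) \defeq p_e(z)^{-1} \sq \ap{s}(H(z))$; and symmetrically $H(\pred_1(z)) \defeq p_g(z)^{-1} \sq \ap{g_s}(H(z))$ and $H(\pred_2(z)) \defeq p_h(z)^{-1} \sq \ap{h_s}(H(z))$, where each composite lands in the correct endpoint by the corresponding computation rule for $g$.

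The real content is in the two path-constructor cases. For $\secc(z) : \pred_1(\succc(z)) = z$ the induction principle demands a dependent path over $\secc(z)$ in the family $z \mapsto (f(z) = g(z))$, i.e.\ a square whose top edge is $\ap{f}(\secc(z))$, whose bottom edge is $\ap{g}(\secc(z)) \equiv s_s(g(z))$, and whose vertical edges are $H(\pred_1(\succc(z)))$ and $H(z)$. The only tool relating the bare section homotopy $s_s$ to such a composite is the coherence $p_s(z) : s_s(f(z)) = \ap{g_s}(p_e(z)) \sq p_g(\succc(z)) \sq \ap{f}(\secc(z))$; the $\ret$ case is dual and uses $p_r(z)$.

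I expect these two cases to be the main obstacle, precisely because the vertical edge $H(\pred_1(\succc(z)))$ is itself the iterated composite $p_g(\succc(z))^{-1} \sq \ap{g_s}(p_e(z)^{-1} \sq \ap{s}(H(z)))$ obtained by unfolding the point-constructor clauses, so the required square does not match $p_s(z)$ on the nose. Reconciling them calls for the three-dimensional filler depicted in \cref{cube}: its faces are the target square, the definitional unfoldings of $H$ at the composite points $\pred_1(\succc(z))$ (resp.\ $\succc(\pred_2(z))$), and the preservation coherence $p_s(z)$ (resp.\ $p_r(z)$), and the sought square is recovered as the remaining face by a Kan filling. Fixing the orientation of this cube, and discharging the auxiliary $\ap{}$-of-composite and whiskering identities that relate $\ap{g_s}(p_e(z)^{-1} \sq \ap{s}(H(z)))$ to the edges appearing in $p_s(z)$, is the delicate cubical bookkeeping that makes the formalisation nontrivial; everything else follows from function extensionality once $H$ is constructed.
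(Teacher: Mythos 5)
Your proposal is correct and follows essentially the same route as the paper's proof: function extensionality, then induction on $\Z_b$ with the point-constructor cases handled by the preservation data $p_e$, $p_g$, $p_h$, and the $\secc$/$\ret$ cases discharged by filling the cube of figure \ref{cube} via Kan filling, with the coherences $p_s$ and $p_r$ supplying the nontrivial faces. You even identify correctly where the difficulty lies, namely that the unfolded vertical edge $p_g(\succc(z))^{-1} \sq \ap{g_s}(p_e(z)^{-1} \sq \ap{s}(H(z)))$ does not match $p_s(z)$ on the nose, which is exactly the bookkeeping the paper delegates to the cube.
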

\begin{proof}
We write $g$ for $\rec_{\Z_b}(T,t,s)$.
By function extensionality, it is enough to construct a term $r :
\Pi_{x : \Z_b} f(x) = g(x)$. We do this using the induction principle.
The case for $0$
follows directly from the assumption $f(0) = t$, and $r(\succc(x)) = \ap{s}(g(x))$
and the corresponding equalities for $\pred_1$ and $\pred_2$
follow directly from the assumption that $f$ respect the bi-invertible
maps $\succc$ and $s$.

It remains to check the cases of $\secc$ and $\ret$.
Since these are symmetric, we only describe the case of $\secc$.
In this case, we have to provide a filler for the following
square of equalities:
\[\begin{tikzcd}[column sep=huge]
 f(\succc(\pred_1(x))) \arrow[d,dash,"\ap{f}(\secc(x))" left]
 \arrow[r,dash,"r(\succc(\pred_1(x)))"]  
 & g(\succc(\pred_1(x))) \arrow[d,dash,"\ap{g}(\secc(x))"]\\
 f(x) \arrow[r,dash,"r(x)" below] & g(x).
\end{tikzcd}\]  
This filler can be obtained by filling the cube in figure \ref{cube}, as follows.
All the sides apart from the square in question can be filled
using the fact that $f$ preserves the bi-invertible maps, and simple
path algebra, so we can conclude the proof
using the Kan filling property of cubes: any open box can be filled.
\end{proof}

Given a type $A : \UU$, let $\idfunc{A} : A \to A$ be the identity function. We have $\idb{A} : \isBiInv(\idfunc{A})$,
so we can define a map $\toBiInv : A = B \to A\simeq_b B$ by path induction, sending
$\refl{} : A = A$ to $\idb{A}$.
By \cite[Corollary~4.3.3]{hottbook} and the univalence axiom, the map $\toBiInv$ is an equivalence.
Let $\toEq : A \simeq_b B \to A = B$ be its inverse.

From this we can derive the principle of (based) equivalence induction,
which we now state.

\begin{lemma}[\texttt{BiInduction}]\label{BiInduction}
Fix a type $A : \UU$ and a type family $P : (B : \UU) \to A \simeq_b B \to \UU$.
If we have $P_0 : P(A,\idfunc{A},\idb{A})$, then we have
    $(B : \UU) \to (e : A \simeq_b B) \to P(B,e)$.
\end{lemma}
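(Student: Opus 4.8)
The plan is to reduce based equivalence induction to ordinary based path induction, exploiting the fact, established in the paragraph preceding the lemma, that $\toBiInv : A = B \to A \simeq_b B$ is an equivalence with inverse $\toEq$. Since the family $P$ is indexed by a bi-invertible map rather than by an honest equality, I would first pull $P$ back along $\toBiInv$. Concretely, define a family $Q : (B : \UU) \to (A = B) \to \UU$ by $Q(B,p) \defeq P(B, \toBiInv(p))$. Because $\toBiInv$ sends $\refl{}$ to $\idb{A}$ by definition, we have the definitional equality $Q(A, \refl{}) \equiv P(A, \idfunc{A}, \idb{A})$, so the hypothesis $P_0$ directly provides an inhabitant of $Q(A, \refl{})$.

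Second, I would apply based path induction ($J$, based at $A$) to the family $Q$. From the single inhabitant $P_0 : Q(A, \refl{})$ this yields a term $\Phi : (B : \UU) \to (p : A = B) \to P(B, \toBiInv(p))$.

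Third, I would close the gap between $\toBiInv(p)$ and an arbitrary bi-invertible map. Given $B : \UU$ and $e : A \simeq_b B$, set $p \defeq \toEq(e) : A = B$; then $\Phi(B, p)$ inhabits $P(B, \toBiInv(\toEq(e)))$. Since $\toEq$ is the inverse of the equivalence $\toBiInv$, there is an equality $\varepsilon : \toBiInv(\toEq(e)) = e$ in the type $A \simeq_b B$. Transporting $\Phi(B, \toEq(e))$ along $\varepsilon$ in the one-variable family $(e' : A \simeq_b B) \mapsto P(B, e')$, with $B$ held fixed, produces $\varepsilon_*\bigl(\Phi(B, \toEq(e))\bigr) : P(B, e)$, which is the desired conclusion.

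The construction is essentially routine once $\toBiInv$ is known to be an equivalence. The only points demanding care are bookkeeping matters that a formalisation can easily get wrong: one must ensure that the base case matches on the nose, i.e.\ that $\toBiInv(\refl{})$ computes to $\idb{A}$ so that $P_0$ applies without an intervening transport, and one must take the final transport in the family obtained by fixing $B$ rather than attempting to transport jointly in both $B$ and $e$. Neither constitutes a genuine obstacle, so I expect the proof to be short.
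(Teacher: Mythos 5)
Your proof is correct and matches the paper's approach exactly: the paper likewise proves the lemma by path induction after translating bi-invertible maps to equalities via $\toEq$ and $\toBiInv$. Your write-up simply makes explicit the details (pulling $P$ back along $\toBiInv$, using that $\toBiInv(\refl{})$ is $\idb{A}$ by definition, and the final transport along $\toBiInv(\toEq(e)) = e$) that the paper's one-line proof leaves implicit.
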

\begin{proof}
This is proven by path induction, after translating bi-invertible maps to equalities, using $\toEq$ and $\toBiInv$.
\end{proof}

Using equivalence induction, and singleton elimination, one can finally prove that a map between types together with bi-invertible maps
that respects the maps, automatically respects the bi-invertible structure.

\begin{lemma}
    \label{inductiontoalgebra}
    The type $\prbi(e,e',\alpha,\beta)$ is equivalent to the type $e' \circ \alpha = \beta \circ e$.
\end{lemma}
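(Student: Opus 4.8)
The plan is to exhibit the first projection $\prbi(e,e',\alpha,\beta) \to (e' \circ \alpha = \beta \circ e)$, the map forgetting the fields $p_g,p_h,p_s,p_r$, as an equivalence. Since the chosen definition of equivalence is precisely ``having contractible fibres'', it suffices to show that for each $p_e : e' \circ \alpha = \beta \circ e$ the fibre consisting of the remaining data is contractible. To make the path algebra tractable I would first collapse both bi-invertible maps to identities. Applying \cref{BiInduction} to $e : A \simeq_b B$, with motive ``for all $A',B',e',\alpha,\beta$, the type $\prbi(e,e',\alpha,\beta)$ is equivalent to $e'\circ\alpha = \beta\circ e$'', reduces to $B \equiv A$, $e \equiv \idfunc{A}$ with structure $\idb{A}$; a second application to $e' : A' \simeq_b B'$ reduces to $B' \equiv A'$, $e' \equiv \idfunc{A'}$ with structure $\idb{A'}$. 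After these reductions $\alpha,\beta : A \to A'$.

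It then remains to analyse $\prbi(\idfunc{A},\idfunc{A'},\alpha,\beta)$, where the first field $p_e$ is just an equality $\alpha = \beta$. Because the bi-invertible structures are the identity ones, the maps $g,h,g',h'$ are identities and $s,r,s',r'$ are $\refl$, so all the $\ap{}$-terms and coherence witnesses collapse definitionally. Concretely, unfolding pointwise in $a : A$, the witness $p_s(a)$ becomes an equality $\refl{\alpha(a)} = p_e(a) \sq p_g(a)$, and symmetrically $p_r(b)$ becomes $\refl{\beta(b)} = p_h(b) \sq p_e(b)$. Thus, for each point, the pair $(p_g(a),p_s(a))$ inhabits the based path space $\Sigma_{q : \beta(a)=\alpha(a)}\bigl(\refl{\alpha(a)} = p_e(a)\sq q\bigr)$, which is contractible by singleton elimination, being the fibre over $\refl{\alpha(a)}$ of left-concatenation by the fixed path $p_e(a)$, an equivalence. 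The same holds for $(p_h,p_r)$.

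Taking the product over the point and reordering the iterated $\Sigma$-type, which is legitimate since $p_s$ depends only on $(p_e,p_g)$ and $p_r$ only on $(p_e,p_h)$, exhibits $\prbi(\idfunc{A},\idfunc{A'},\alpha,\beta)$ as $\Sigma_{p_e}\,C(p_e)$ with each $C(p_e)$ contractible. Hence the first projection is an equivalence in the reduced case, and equivalence induction transports this back to the general statement $\prbi(e,e',\alpha,\beta) \simeq (e'\circ\alpha = \beta\circ e)$.

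The main obstacle will be the bookkeeping in the reduced case: one must unfold $\idb{A}$ and $\idb{A'}$ and verify that $s',r'$ genuinely reduce to $\refl$ and that the instances of $\ap{}$ on $\idfunc{}$ and on $\refl$ vanish, so that $p_s$ and $p_r$ take the clean form above; the remainder is the standard fact that adjusting a path by a fixed path is an equivalence. A secondary point requiring care is arranging the motives of the two applications of \cref{BiInduction} so that $\alpha$, $\beta$, and the primed data stay universally quantified and $\beta$ is permitted to depend on the codomain being inducted over.
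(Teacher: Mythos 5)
Your proposal is correct and follows essentially the same route as the paper: double equivalence induction via \cref{BiInduction} on $e$ and $e'$, followed by an analysis of $\prbi(\idfunc{},\idfunc{},\alpha,\beta)$. The paper states the reduced case as a one-line observation that this type is equivalent to $\alpha = \beta$; your contractibility argument for the residual fields $(p_g,p_s)$ and $(p_h,p_r)$, via based path spaces after reordering the $\Sigma$-type, is precisely the content of that observation spelled out.
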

\begin{proof}
We use equivalence induction (\cref{BiInduction}) for $e$ and $e'$ and then observe that
the type
\[
    \prbi((\idfunc{},\idb{}),(\idfunc{},\idb{}),\alpha,\beta)
\]
is equivalent to the type of equalities $\alpha = \beta$.
\end{proof}

\begin{proof}[Proof of \cref{enough}]
    The theorem is a corollary of \cref{uniquenessZb} and \cref{inductiontoalgebra}.
\end{proof}

One should notice that \cref{inductiontoalgebra} can be proven directly, avoiding the usage of
the univalence axiom (which was used to prove that $\toBiInv$ is an equivalence). The reason
why we don't do this, is because the path algebra involved in proving \cref{inductiontoalgebra}
directly is non-trivial.

\section{$\Z$ is a set}
\label{sec:isset}

In this section we relate $\Z_b$ with the usual definition of the integers as signed natural numbers,
which we call $\Z_w$.
We show that $\Z_b \simeq \Z_w$, and since we already know that $\Z_w$ is a set,
we deduce that $\Z_b$ is a set too.

%

\begin{definition}
Let $\Z_w$ be the inductive type with the following constructors:
\begin{itemize}
    \item[--] $0 : \Z_w$
    \item[--] $\strpos : \N \to \Z_w$
    \item[--] $\strneg : \N \to \Z_w$
\end{itemize}
\end{definition}

\begin{theorem}[$\Z$\texttt{is}$\Z$]
    We have an equivalence $\Z_b \simeq \Z_w$.
\end{theorem}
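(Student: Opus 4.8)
The plan is to follow the strategy sketched in the introduction: construct a normalisation map $\nf : \Z_b \to \Z_w$ and an embedding $\iota : \Z_w \to \Z_b$, and exhibit them as mutually inverse. First I would define $\iota$ by ordinary recursion on the inductive type $\Z_w$, sending $0 \mapsto 0$, $\strpos(n) \mapsto \succc^{n+1}(0)$, and $\strneg(n) \mapsto \pred_1^{n+1}(0)$. For $\nf$ I would invoke \cref{mapout} with $T \defeq \Z_w$, $t \defeq 0$, and with $f, g : \Z_w \to \Z_w$ the evident successor and predecessor functions on signed natural numbers; these are mutually inverse (a finite case analysis, using that $\Z_w$ is a set), so \cref{mapout} yields $\nf$ with the definitional computation rules $\nf(0) \equiv 0$ and $\nf(\succc(z)) \equiv f(\nf(z))$, together with the fact that $\nf$ sends $\pred_1$ to $g$, i.e. $\nf(\pred_1(z)) = g(\nf(z))$.

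The composite $\nf \circ \iota = \idfunc{\Z_w}$ is the easy direction. Since $\Z_w$ is an ordinary inductive type, I would prove $(w : \Z_w) \to \nf(\iota(w)) = w$ by its induction principle, with no propositionality hypothesis required. The $0$ case is definitional, and the $\strpos(n)$ and $\strneg(n)$ cases reduce, after unfolding the computation rules of $\nf$, to the auxiliary facts $f^{n+1}(0) = \strpos(n)$ and $g^{n+1}(0) = \strneg(n)$, each proved by a short induction on $n : \N$.

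The composite $\iota \circ \nf = \idfunc{\Z_b}$ is where the real difficulty lies, and it is exactly what \cref{enough} is designed to handle. One is tempted to prove $(z : \Z_b) \to \iota(\nf(z)) = z$ by \cref{proveprop}, but this is circular: the motive $z \mapsto (\iota(\nf(z)) = z)$ is a family of propositions only if $\Z_b$ is already known to be a set, which is precisely what we are trying to establish. Instead I would apply \cref{enough} to the map $\iota \circ \nf : \Z_b \to \Z_b$, taking $T \defeq \Z_b$, $t \defeq 0$, and $s$ the canonical bi-invertible structure carried by $\succc, \pred_1, \pred_2, \secc, \ret$. The hypothesis $(\iota \circ \nf)(0) = 0$ is immediate from $\nf(0) \equiv 0$. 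The hypothesis $s \circ (\iota \circ \nf) = (\iota \circ \nf) \circ \succc$ unfolds, using $\nf(\succc(z)) \equiv f(\nf(z))$, to the statement that $\iota$ commutes with the successor, i.e. $(w : \Z_w) \to \succc(\iota(w)) = \iota(f(w))$; this I would prove by induction on $\Z_w$, the only non-definitional cases being those of $\strneg$, which use the derived identity $\succc(\pred_1(x)) = x$ obtained from $\pred_1(x) = \pred_2(x)$ (as in the proof of \cref{proveprop}) and $\ret$. Then \cref{enough} gives $\iota \circ \nf = \rec_{\Z_b}(\Z_b, 0, s)$; applying the same theorem to $\idfunc{\Z_b}$, whose two hypotheses hold by reflexivity, identifies $\rec_{\Z_b}(\Z_b, 0, s)$ with $\idfunc{\Z_b}$, so that $\iota \circ \nf = \idfunc{\Z_b}$.

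Together the two homotopies make $\iota$ and $\nf$ mutually inverse, whence $\Z_b \simeq \Z_w$. The main obstacle is the single step $\iota \circ \nf = \idfunc{\Z_b}$, and the point of the whole development is that \cref{enough} reduces it to the purely first-order data that $\iota$ preserves $0$ and $\succc$, sidestepping both the higher coherences of $\Z_b$ and the apparent circularity of an induction whose motive would need $\Z_b$ to be a set. As an immediate corollary, transporting the set structure of $\Z_w$ across the equivalence shows that $\Z_b$ is a set.
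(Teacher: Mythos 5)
Your proposal is correct and follows essentially the same route as the paper: construct $\nf$ via \cref{mapout}, construct the embedding by induction on $\Z_w$, dispatch the easy composite by $\Z_w$-induction, and reduce the hard composite $\iota \circ \nf = \idfunc{\Z_b}$ to preservation of $0$ and $\succc$ via \cref{enough}. Your version is in fact slightly more explicit than the paper's, spelling out the second application of \cref{enough} to $\idfunc{\Z_b}$ (to identify $\rec_{\Z_b}(\Z_b,0,s)$ with the identity) and the derived identity $\succc(\pred_1(x)) = x$ needed in the $\strneg$ cases, both of which the paper leaves implicit.
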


\begin{proof}
On the one hand, one can define $\succc_w : \Z_w \to \Z_w $ by induction, by mapping:
\begin{itemize}
    \item[--] $0 \mapsto \strpos(0)$;
    \item[--] $\strpos(n) \mapsto \strpos(\succc(n))$;
    \item[--] $\strneg(0) \mapsto 0$;
    \item[--] $\strneg(\succc(n)) \mapsto \strneg(n)$.
\end{itemize}
Similarly one defines $\pred_w$.
The fact that $\pred_w$ provides a left and right inverse for $\succc_w$ is straightforward.
So, by \cref{mapout} we get a map $\nf : \Z_b \to \Z_w$.
On the other hand, it is easy to construct a map $i : \Z_w \to \Z_b$ by induction.

Induction on $\Z_w$ shows that $\nf \circ i = \idfunc{\Z_w}$.
The hard part is to show that $i \circ \nf = \idfunc{\Z_b}$.
This is where \cref{enough} comes in handy. \cref{enough} implies that it is enough to check that
$(i \circ \nf)(0) = 0$ and that $\succc \circ (i \circ \nf) = (i \circ \nf) \circ \succc$,
and this follows directly by construction.
\end{proof}

\section{Representing $\Z$ using a universe}
\label{sec:usinguniverse}

In this section we give another definition of the integers, denoted by $\Z_U$, which allows one to easily prove that they are the initial
type together with an inhabitant and an equality from the type to itself.

%
To make sense of initiality, we first define the type of $\Z$-algebras and of $\Z$-algebra morphisms.

\begin{definition}
    A $\Z$-algebra is a type $T : \UU$ together with an inhabitant $t : T$,
    and an equality $e : T = T$.
    We denote such a $\Z$-algebra as $(T,t,e)$, or $T$ if the rest of the structure
    can be inferred from the context.
\end{definition}

\begin{definition}
    A morphism of $\Z$-algebras from $(T,t,e)$ to $(T',t',e')$
    is given by a map $f : T \to T'$, together with an equality $f(t) = t'$, and a proof that
    $e_*(f) = e'_*(f)$.
    We denote the type of morphisms of $\Z$-algebras between $T$ and $T'$ by $T \to_{\Z} T'$.
\end{definition}

We are interested in initial $\Z$-algebras.

\begin{definition}
    A initial $\Z$-algebra is a $\Z$-algebra $(T,t,e)$ such that
    for any other $\Z$-algebra $(T',t',e')$ the type
    $T\to_{\Z} T'$ is contractible.
\end{definition}




See \cref{Z-u} for the definition of the initial
$\Z$-algebra using a mini universe%
\footnote{This is inspired by Zongpu Xie's proposal how to represent
  HIITs in Agda \cite{ambrus-agda}.}.
Then define an interpretation function $\El : U \to \UU$, as the higher inductive family with
only one constructor $0 : \El(z)$.
Define the type $\Z_U \defeq \El(z)$. The type $\Z_U$ has the structure of a $\Z$-algebra, since we have
$0 : \Z_U$ and $s \defeq \ap{\El}(q) : \Z_U = \Z_U$.
The following result follows by a routine application of the induction
principle of $\Z_U$.

\begin{theorem}[$\Z$\texttt{isInitial}]
    The $\Z$-algebra $\Z_U$ is initial.\qed
\end{theorem}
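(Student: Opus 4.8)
The plan is to fix an arbitrary $\Z$-algebra $(T',t',e')$ and show that the type $\Z_U \to_{\Z} T'$ of morphisms is contractible by exhibiting a center and contracting every morphism to it. Recall that a morphism $\Z_U \to_{\Z} T'$ unfolds to a triple consisting of a map $f : \Z_U \to T'$, an equality $f(0) = t'$, and a witness that $s_*(f) = e'_*(f)$, where $s \defeq \ap{\El}(q)$. The whole argument rests on unfolding the eliminator of the inductive--inductive pair $(U,\El)$: its recursion principle will supply the center, and its dependent induction principle the contraction.

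For existence, I would first apply the recursion principle of $U$ (which, having a point $z$ and a loop $q$, is the circle) into the universe $\UU$, using the type $T'$ and the self-equality $e'$, obtaining $R : U \to \UU$ with $R(z) \equiv T'$ and $\ap{R}(q) = e'$. I would then eliminate $\El$ into the family $u \mapsto \El(u) \to R(u)$, sending the sole constructor $0 : \El(z)$ to $t' : R(z)$. This produces a map $f \defeq F(z) : \Z_U \to T'$ with $f(0) \equiv t'$, and the coherence for the loop $q$ supplied by the induction principle is the naturality square relating $f$, $\mathrm{coe}(s)$, and $\mathrm{coe}(e')$, which (after matching transport directions) is precisely the compatibility $s_*(f) = e'_*(f)$ demanded of a morphism. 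Bundling $f$ with these data gives the center.

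For the contraction, given any morphism $(g,g_0,g_e)$, I would build a pointwise equality $(x : \Z_U) \to f(x) = g(x)$ by the dependent induction principle of $\Z_U$: the value at $0$ comes from composing $f(0) \equiv t'$ with $g_0 : g(0) = t'$, and the coherence over the loop $q$ is forced by combining the loop-computation rule of $f$ with the compatibility witness $g_e$, using elementary path algebra. Function extensionality then yields $f = g$ as functions, after which the equality of the remaining components of the two morphisms follows because they inhabit path types over the now-identified underlying map; these are assembled by transporting along the functional equality inside the iterated $\Sigma$-type.

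The step I expect to be the main obstacle is the bookkeeping of the inductive--inductive eliminator's loop coherence: one must formulate the induction principle of $(U,\El)$ precisely, check that transport in the families $X \mapsto (X \to T')$ and $Y \mapsto (\Z_U \to Y)$ lines up with the naturality of $F$ over $q$ (getting the direction of $e'$ versus $e'^{-1}$ right), and, in the contraction, identify the full triple of morphism data rather than only the underlying functions. None of these steps is conceptually deep, but they are exactly the fiddly transport computations that make this ``routine application'' tedious rather than immediate, and this is the part best discharged in the cubical Agda formalisation.
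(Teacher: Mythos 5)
Your overall route---obtain $R : U \to \UU$ with $R(z) \equiv T'$ and $\ap{R}(q) = e'$ by circle-style recursion on $U$, eliminate the family $\El$ into $R$ sending $0 \mapsto t'$, read off the algebra-morphism condition from the dependent application of the resulting section over the loop $q$, and then contract---is exactly the ``routine application of the induction principle'' that the paper invokes without giving any detail, and your construction of the center is correct. The comparison is therefore mainly about whether your contraction actually goes through.

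There is one genuinely invalid step there. You claim that, once the underlying functions are identified, the remaining two components of the morphism triple ``follow because they inhabit path types over the now-identified underlying map.'' That is not an argument: these path types are not propositions. The component $g_0 : g(0) = t'$ lives in an identity type of the arbitrary type $T'$, and you may not assume $T'$ is a set---in the crucial instance $T' \equiv \Z_U$ (needed to get uniqueness of endomorphisms, from which the set-ness of $\Z_U$ is ultimately derived) such an assumption would be circular, since truncation facts are exactly what initiality is supposed to deliver. So identifying the triples requires genuine $2$-dimensional data: a $2$-cell relating $r(0)$, $f_0$, and $g_0$, and a square relating $f_e$ and $g_e$ over $r$. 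These are obtainable, but only because of how $r$ is built: its $0$-case must be \emph{defined} as (essentially) $f_0 \sq g_0^{-1}$, so that the computation rule of the eliminator discharges the first obligation; and the loop obligation comes from $\mathrm{apd}$ of $r$ over $q$, which presupposes that you have first extended $g$, using $g_e$ and induction on $U$, to a section $G : (u : U) \to \El(u) \to R(u)$---an extension your sketch implicitly needs anyway, since $\Z_U \equiv \El(z)$ has no standalone induction principle: the motive of the eliminator of $\El$ is necessarily generalized over all of $U$. A cleaner repair avoids the center-plus-contraction split entirely: the dependent universal property of the loop $q$ gives an equivalence between $\Z_U \to_{\Z} T'$ and $(H : (u : U) \to \El(u) \to R(u)) \times (H(z)(0) = t')$, and since $\El$ is freely generated by $0 : \El(z)$, evaluation $H \mapsto H(z)(0)$ is an equivalence, so the morphism type is equivalent to the singleton $(v : T') \times (v = t')$, contractible with no component patching at all.
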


In particular, we have.

\begin{proposition}
Given a type $T$ with an inhabitant $t : T$ and an equality $e : T = T$, we get a morphism of $\Z$-algebras $\Z_U \to T$.\qed
\end{proposition}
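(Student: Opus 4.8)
The plan is to read the morphism off directly from the initiality theorem just proved, with essentially no additional work. First I would observe that the given data $(T,t,e)$---a type $T$, an inhabitant $t : T$, and a self-equality $e : T = T$---is, by definition, precisely a $\Z$-algebra; no structure needs to be constructed, since the definition of $\Z$-algebra asks for exactly an inhabitant and an equality from the carrier to itself. Applying the initiality of $\Z_U$ (the preceding theorem $\Z$\texttt{isInitial}) to this $\Z$-algebra then yields that the type of $\Z$-algebra morphisms $\Z_U \to_{\Z} T$ is contractible.

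It then remains only to extract an inhabitant from a contractible type, which is immediate: unfolding $\isContr(\Z_U \to_{\Z} T)$ gives a $\Sigma$-type whose first component is a center of contraction, and this first projection is the desired morphism $\Z_U \to T$. There is no genuine obstacle here. The proposition is a purely formal consequence of initiality---the familiar fact that an initial object admits a morphism into every other object---and all of the substantive content, namely verifying the induction principle of $\Z_U$ and assembling it into a proof of contractibility, was already discharged in the initiality theorem. This statement merely records that specialisation, with contractibility weakened to plain inhabitation.
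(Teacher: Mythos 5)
Your proposal is correct and matches the paper's (implicit) argument exactly: the paper states this proposition with ``In particular, we have'' immediately after the theorem $\Z$\texttt{isInitial}, meaning it is obtained precisely as you describe, by applying initiality to the $\Z$-algebra $(T,t,e)$ and projecting the center of contraction out of the contractible type $\Z_U \to_{\Z} T$. Nothing is missing.
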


Again, comparing maps out of $\Z_U$ is easy, thanks to the following theorem.

\begin{theorem}
    Given a type $T$, an inhabitant $t : T$, an equality $e : T = T$, and a map $f : \Z_U \to T$,
    if $f(0) = t$ and $e_* \circ f = f \circ s_*$ then $f = \rec_{\Z_U}(T,t,e)$.\qed
\end{theorem}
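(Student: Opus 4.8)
The plan is to read this statement as the uniqueness half of the initiality of $\Z_U$, which we have just established, and to obtain it from the contractibility of the type of $\Z$-algebra morphisms out of $\Z_U$. The key observation is that the two hypotheses are exactly the data turning $f$ into such a morphism: $f(0) = t$ is preservation of the basepoint, and $e_* \circ f = f \circ s_*$ is preservation of the equality.

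Concretely, I would first record that $g \defeq \rec_{\Z_U}(T,t,e)$ is the underlying map of the canonical morphism $\Z_U \to_{\Z} (T,t,e)$ supplied by the recursion principle (this is precisely the morphism produced by the preceding proposition), so that $g$ already comes equipped with morphism structure. Next I would upgrade $f$ to a morphism $m_f : \Z_U \to_{\Z} (T,t,e)$. The basepoint component is the hypothesis $f(0) = t$; for the coherence component I would unfold the definition of $\Z$-algebra morphism, whose coherence datum is an equality $s_*(f) = e_*(f)$ with the two transports taken along $s$ in the domain and along $e$ in the codomain of $f$. Applying the computation rules for transport in function types rewrites both sides as composites with the coercions $s_*$ and $e_*$, and identifies this datum with the commuting square $e_* \circ f = f \circ s_*$ (up to the inversion of a coercion). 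This supplies the coherence, so $m_f$ is a genuine morphism.

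Finally, initiality says that the type $\Z_U \to_{\Z} (T,t,e)$ is contractible, hence a proposition, so any two of its inhabitants are equal; in particular $m_f$ equals the canonical morphism whose underlying map is $g$. Applying $\ap{\pi_1}$, where $\pi_1$ is the projection of a morphism to its underlying function, yields $f = g = \rec_{\Z_U}(T,t,e)$. I expect the only delicate point to be the middle step, namely unwinding the morphism coherence $s_*(f) = e_*(f)$ into the naturality square $e_* \circ f = f \circ s_*$: this is pure transport bookkeeping --- one must track whether each transport acts on the domain or the codomain of $f$ and dispose of the resulting inverse coercion --- but, in contrast with the cube argument needed for $\Z_b$ in \cref{uniquenessZb}, it involves no higher coherences and so is comparatively painless.
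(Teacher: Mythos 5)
Your proof is correct and takes essentially the same route as the paper, which proves this theorem by combining the initiality of $\Z_U$ (contractibility of the type of $\Z$-algebra morphisms out of it) with the observation that preserving the equality $e : T = T$ amounts to commuting with its coercion $e_* : T \to T$ --- precisely your two steps of upgrading $f$ to a morphism via transport bookkeeping and then projecting out the equality of morphisms. The only difference is one of detail: the paper leaves this as a one-sentence sketch, while you spell out the transport computation.
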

Analogously to the case of $\Z_b$, this is proven by combining the initiality of $\Z_U$ with
the fact that to preserve an equality in the universe $e : T = T$, it is enough to commute with
its corresponding coercion function $e_* : T \to T$.

Following the argument given in \cref{sec:isset}, one deduces the following.

\begin{theorem}
    \label{Zb-equiv-Zw}
  There is an equivalence $\Z_U \simeq \Z_w$.\qed
\end{theorem}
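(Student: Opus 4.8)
The plan is to follow the template already established for the equivalence $\Z_b \simeq \Z_w$ in Section~\ref{sec:isset}, transporting the argument to the universe-based representation $\Z_U$. Since $\Z_w$ is known to be a set, and we will exhibit maps in both directions that compose to the identity, it suffices to construct a normalisation function $\nf_U : \Z_U \to \Z_w$ and an embedding $i_U : \Z_w \to \Z_U$, and then verify that the two composites are the respective identities.

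First I would equip $\Z_w$ with the structure of a $\Z$-algebra. We already have $0 : \Z_w$, and we have the self-map $\succc_w : \Z_w \to \Z_w$ defined by induction in the proof of the previous equivalence. The only new ingredient is an equality $e_w : \Z_w = \Z_w$, which we obtain via the univalence axiom from the fact that $\succc_w$ is an equivalence (its inverse $\pred_w$ was already constructed, and the section/retraction homotopies are straightforward inductions). This makes $(\Z_w, 0, e_w)$ into a $\Z$-algebra. By initiality of $\Z_U$ (\textbf{$\Z$isInitial}), or equivalently by the immediately preceding proposition, we obtain a morphism $\Z_U \to_{\Z} \Z_w$ whose underlying map is $\nf_U : \Z_U \to \Z_w$; by construction its coercion component commutes with $s_* = \succc_w$ up to the univalence computation rule.

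Next I would construct $i_U : \Z_w \to \Z_U$ by ordinary induction on $\Z_w$, using $0 : \Z_U$ as the base point and the coercion map $s_* : \Z_U \to \Z_U$ (together with its inverse, obtained from $s : \Z_U = \Z_U$) to interpret $\strpos$ and $\strneg$. The composite $\nf_U \circ i_U = \idfunc{\Z_w}$ then follows by a routine induction on $\Z_w$, as in Section~\ref{sec:isset}. For the reverse composite $i_U \circ \nf_U = \idfunc{\Z_U}$, I would invoke the comparison theorem for maps out of $\Z_U$ stated just above: it is enough to check that $(i_U \circ \nf_U)(0) = 0$ and that $e_* \circ (i_U \circ \nf_U) = (i_U \circ \nf_U) \circ s_*$, both of which hold by construction, since each factor preserves the base point and intertwines the coercion maps.

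The main obstacle, I expect, will be bookkeeping around the univalence axiom: unlike the $\Z_b$ case, where $\succc$ and its inverses are literal constructors, here the successor on $\Z_w$ must be repackaged as an equality $e_w$ in the universe, and the relevant coercion $e_{w*}$ agrees with $\succc_w$ only up to the univalence computation rule rather than definitionally. Ensuring that the hypotheses $e_* \circ f = f \circ s_*$ of the comparison theorem are met therefore requires carefully transporting along these computation rules, and this is where the path algebra becomes delicate. Once that is handled, the conclusion $\Z_U \simeq \Z_w$ follows exactly as in the $\Z_b$ case, and transitivity with $\Z_b \simeq \Z_w$ recovers the full chain of equivalences.
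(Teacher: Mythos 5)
Your proposal matches the paper's intended argument: the paper omits the proof but states explicitly that it follows the template of Section~\ref{sec:isset} (as in the Licata--Shulman construction), which is precisely what you do --- packaging $\succc_w$ as an equality $e_w : \Z_w = \Z_w$ via univalence, obtaining $\nf_U$ from initiality, building $i_U$ by induction on $\Z_w$, and discharging the hard composite $i_U \circ \nf_U = \idfunc{\Z_U}$ with the comparison theorem for maps out of $\Z_U$. Your closing remark about the coercion $e_{w*}$ agreeing with $\succc_w$ only up to the univalence computation rule correctly identifies the one genuinely new bookkeeping point relative to the $\Z_b$ case.
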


We omit the proof since it is basically the same as the construction
presented in \cite{licatashulman} when proving that the integers are
the loop space of the circle.

Indeed, the mini universe $U$ is nothing but the higher inductive type
presentation of the circle $S^1$ of \cite[Section~6.1]{hottbook}, so
that $(z = z) \equiv \Omega S^1$.
Moreover, the type family $\El$ is equivalent to the path space fibration of the circle, in the following sense.

\begin{theorem}[\texttt{ElisPath}]
    For every $u : U$ we have $\ed(u) : \El(u) \simeq (z = u)$.
\end{theorem}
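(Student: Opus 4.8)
The plan is to identify $\El$ with the based path fibration of $U$ via the encode--decode method, in the packaged form of the fundamental theorem of identity types \cite{hottbook}. First I would define the encoding map $\text{encode}(u) : (z = u) \to \El(u)$ by path induction, setting $\text{encode}(u)(p) \defeq p_*(0)$ using the single constructor $0 : \El(z)$, so that $\text{encode}(z)(\refl{z}) \equiv 0$. The desired equivalence $\ed(u) : \El(u) \simeq (z = u)$ will then be produced as the inverse of $\text{encode}(u)$, so everything reduces to showing that $\text{encode}$ is a fiberwise equivalence.

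By the fundamental theorem of identity types, the fiberwise map $\text{encode}(u) : (z = u) \to \El(u)$ is a fiberwise equivalence if and only if the total space $(u : U) \times \El(u)$ is contractible. Thus the whole statement reduces to proving $\isContr\big((u : U) \times \El(u)\big)$, with proposed center $(z, 0)$. To establish this I would construct a contraction, assigning to every $(u,x)$ a path $(z,0) = (u,x)$, by the induction principle of the higher inductive family $\El$ together with that of $U$. On the point constructor $0 : \El(z)$ the contraction is $\refl{}$ at $(z,0)$. The essential geometric ingredient is that the path constructor $q : z = z$ lifts to a path $(z,0) = (z, q_*(0))$ in the total space, connecting $0$ to its transport $q_*(0)$; iterating this, every element of the fiber $\El(z)$ is connected to the center. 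Identifying $q_*$ with the successor map, so that $\El(z)$ is freely generated by $0$ under $q_*$ and its inverse, is what makes the fiber look like $\Z$ and the total space a contractible mapping torus; this is also the concrete realization of the decoding function $\El(u) \to (z=u)$ inverse to $\text{encode}$.

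The main obstacle is precisely this contraction, namely checking the coherence imposed by the path constructor $q$. Concretely, the case of $q$ in the induction demands that the pointwise equalities we select respect transport along $q$, which is a two-dimensional path-algebra computation of exactly the kind that arises when proving $\Omega S^1 \simeq \Z$. Since $U$ is the circle and $\El$ its universal cover, this computation is the one carried out in \cite{licatashulman}, and it is the reason the detailed proof is omitted. Once the total space is seen to be contractible, the conclusion $\ed(u) : \El(u) \simeq (z = u)$ follows immediately from the fundamental theorem, uniformly in $u : U$.
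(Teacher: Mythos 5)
Your reduction via the fundamental theorem of identity types is sound as far as it goes: the fiberwise map $\mathrm{encode}(u)(p) \defeq p_*(0)$ is an equivalence if and only if $(u : U) \times \El(u)$ is contractible. But the actual content --- the contraction of the total space --- is never constructed, and the sketch you give is organized around the wrong induction. Since $\El$ is an inductive \emph{family} with the single constructor $0 : \El(z)$, its induction principle is singleton elimination: to prove $P(u,x)$ for all $u : U$ and $x : \El(u)$ it suffices to prove $P(z,0)$. There is no case for $q$, because $q$ is a constructor of $U$, not of $\El$, and a motive of the form $(u:U) \to \El(u) \to \UU$ is discharged by $\El$-induction alone. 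Hence $\isContr\bigl((u:U)\times\El(u)\bigr)$ is a one-liner (take $\refl{}$ at $(z,0)$), and the two-dimensional coherence you anticipate never arises. Conversely, the route you actually describe --- connecting every element of the fiber $\El(z)$ to the center by ``iterating'' the lift of $q$ --- presupposes that $\El(z)$ is generated by $0$ under $q_*$ and its inverse, i.e., essentially that $\El(z) \simeq \Z$ with $q_*$ acting as successor. Those facts are \emph{consequences} of this theorem (via \cref{Zb-equiv-Zw}), not available inputs, and the computation in \cite{licatashulman} concerns a cover defined by circle \emph{recursion} with fiber $\Z$, which is not how $\El$ is defined in \cref{Z-u}. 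So the central step of your proof is deferred to a reference that does not apply to this $\El$ without first proving what you set out to prove; your remark that this computation ``is the reason the detailed proof is omitted'' is also mistaken --- the paper proves this theorem in full, and what it omits is the proof of $\Z_U \simeq \Z_w$.

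The paper's proof exploits exactly the symmetry your encode map hints at: both $\El(u)$ and $z = u$ satisfy singleton elimination, so one defines $\ed(u) : \El(u) \to (z = u)$ by $\El$-induction sending $0 \mapsto \refl{z}$, defines the candidate inverse by path induction sending $\refl{z} \mapsto 0$, and verifies both round trips by the same two induction principles, each check being the reflexivity case. No univalence, no fundamental theorem, and no interaction with $q$ is needed. Your framing can be repaired verbatim by replacing your contraction argument with the one-line $\El$-induction above, but at that point the direct back-and-forth argument is shorter.
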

\begin{proof}
    We construct a map $\ed(u) : \El(u) \to (z=u)$
    using induction on $U$ and mapping $0 : \El(z)$ to $\refl{z}$,
    To construct a map going the other way,
    we use path induction and map $\refl{z}$ to $0$.
    It is then straightforward to see that these maps
    give an equivalence as in the statement.
\end{proof}

As a corollary, we obtain the well-known equivalence between the loop space of the circle and the integers.

\begin{corollary}
    We have an equivalence $\Omega S^1 \simeq \Z_w$.\qed
\end{corollary}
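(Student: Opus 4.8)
The plan is to assemble the stated equivalence by composing a short chain of equivalences and definitional equalities, all of which are already available. The key observation, noted immediately before the statement, is that the mini universe $U$ is precisely the higher inductive presentation of the circle $S^1$, with the point constructor $z$ playing the role of the base point and $q$ playing the role of the loop. Under this identification the loop space is, definitionally, the type of self-identifications of the base point, so that $\Omega S^1 \equiv (z = z)$.

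First I would instantiate \texttt{ElisPath} at $u \defeq z$, which produces an equivalence $\ed(z) : \El(z) \simeq (z = z)$; inverting it gives $(z = z) \simeq \El(z)$. Next I would use the defining equation $\Z_U \defeq \El(z)$ to rewrite $\El(z)$ as $\Z_U$, and then apply \cref{Zb-equiv-Zw} to obtain $\Z_U \simeq \Z_w$. Composing these steps yields the chain
\[
\Omega S^1 \;\equiv\; (z = z) \;\simeq\; \El(z) \;\equiv\; \Z_U \;\simeq\; \Z_w,
\]
and since equivalences compose, this gives the desired $\Omega S^1 \simeq \Z_w$.

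The hard part will not be any path algebra — everything reduces to composing maps already shown to be equivalences — but rather making the identification of $U$ with $S^1$ precise enough to justify $\Omega S^1 \equiv (z = z)$. Strictly speaking one should check that the constructors $z$ and $q$ of $U$ match the base and loop constructors of the circle HIT of \cite[Section~6.1]{hottbook}, so that the two higher inductive types are literally the same up to renaming of constructors; once this is granted, the equality $\Omega S^1 \equiv (z = z)$ holds definitionally and the corollary is immediate. A reader wanting to avoid even this mild bookkeeping could instead take $\Omega S^1 \simeq (z=z)$ as the defining property of the loop space and feed it directly into the chain above, in which case no separate justification is needed.
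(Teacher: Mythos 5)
Your proposal is correct and matches the paper's intended argument exactly: the corollary is stated with \qed precisely because it follows by composing the equivalence of \texttt{ElisPath} instantiated at $z$ (together with the identification $\Omega S^1 \equiv (z=z)$, which the paper asserts in the sentence preceding the theorem) with $\Z_U \defeq \El(z)$ and \cref{Zb-equiv-Zw}. Your closing remark about making the identification of $U$ with the circle HIT precise is the right caveat, and the paper handles it in the same informal way you suggest.
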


This suggests that alternatively one could view the representation of the integers as a
universe as an inductive-inductive presentation of the circle equipped
with a family that has a point in the fiber over the base point. 




\section{Formalization in cubical Agda}

We formally checked the results of this paper \cite{code} using cubical Agda \cite{cubical-agda}.
There are two differences between the informal presentation in the paper and the formalisation.
The first one is that the presentation in the paper is done using book-HoTT \cite{hottbook}, whereas
the formalisation is done using a cubical type theory. In this case, this difference is not important, since
it is easy to translate the formalized arguments to book-HoTT.

The real difference is in the definition of higher inductive types. In the paper we define higher inductive types
as initial algebras for a certain signature (\cref{background}). In the formalisation, we use higher inductive types
as implemented in cubical Agda, which are based on \cite{cubical-hits}.
Although it is natural to assume that the Agda higher inductive type should be initial in the sense of \cref{background},
proving this fact is actually one of the main difficulties in the formalisation (\cref{uniquenessZb}).

In proving the results of \cref{sec:usingbi}, we developed the theory of bi-invertible maps in cubical Agda, which
wasn't available. We prove that the type of bi-invertible maps between $A$ and $B$ is equivalent to the type of equivalences
between $A$ and $B$, and the principle of bi-invertible induction.

\section{Open questions}
\label{sec:concl-open-quest}

\subsection*{Preservation of properties}

The key result in the above discussion is \cref{inductiontoalgebra}, which can be reformulated as follows.
Let $T,T' : \UU$, $s : T \to T$, $s' : T' \to T'$, $\phi : \isBiInv(s)$, and $\phi' : \isBiInv(s')$.
We can define the following two types of morphisms between $T$ and $T'$:
\begin{align*}
    \Map_{\End}(T,T') &\defeq (f : T \to T') \times (s' \circ f = f \circ s)\\
    \Map_{\biinv}(T,T') &\defeq (f : T \to T') \times \prbi(s,s',f,f).
\end{align*}
Informally, $\Map_{\End}(T,T')$ is the type of maps that respect the endomorphism, and $\Map_{\biinv}(T,T')$ is
the type of maps that respect the endomorphism and the proof that the endomorphism is a bi-invertible map.

We have a forgetful map $\Map_{\biinv}(T,T') \to \Map_{\End}(T,T')$, and what \cref{inductiontoalgebra} says is that
this map is an equivalence.

There is something special about the type family $\isBiInv : (A \to B) \to \UU$, and that is that it is valued
in propositions.
One might wonder if \cref{inductiontoalgebra} is a general principle, in the following sense.
Say that we have a signature $S$ for a type of algebras, and we extend it to a signature $S'$, such that
the fields we added take values in propositions. In the above example $S$ corresponds to $(T : \UU) (f : T \to T)$
and $S'$ corresponds to the extension $(T : \UU) (f : T \to T) (\phi : \isBiInv(s))$.
As usual, given $S'$-algebras $T,T'$, we have a forgetful map $\Map_{S'}(T,T') \to \Map_S(T,T')$.
Is this map an equivalence in general?

The following example, suggested by Paolo Capriotti, shows that this is not necessarily the case.
\begin{example}
    \label{counterexample}
    Consider $S$, the signature $(T : \UU) (o : T \to T \to T)$, and $S'$, the extension
    \begin{align*}
        &(T : \UU) (o : T \to T \to T) (tr : \isSet(T))\\
        &(e : T) \left(u : (t : T) \to o(t,e) = t \times o(e,t) = t\right).
    \end{align*}
    The $S$-algebras are the types with a binary operation, and the $S'$-algebras are the sets with 
    a binary operation with a distinguished element that is a left and right unit.

    The extension $S'$ is propositional. This is because being a set is a proposition (\cite[Theorem~7.1.10]{hottbook}),
    so $tr$ inhabits a proposition, two left and right units must necessarily coincide,
    so $e$ inhabits a proposition (assuming $u$),
    and the identity types of a set are propositions and these are closed under pi-types (\cite[Theorem~7.1.9]{hottbook}),
    so $u$ inhabits a proposition (assuming $tr$).

    Let us see that for $S'$-algebras $T,T'$ the forgetful map $\Map_{S'}(T,T') \to \Map_S(T,T')$ is not an
    equivalence in general. Let $T$ be $(\N,+,\phi,0,\psi)$, where $\phi$ is a proof that the natural numbers form
    a set, and $\psi$ is a proof that $0$ is a left and right unit for $+$.
    Let $T'$ be $(\bool,\vee,\phi',\bot,\psi')$, where $\phi'$ is a proof that the booleans form a set,
    and $\phi$ is a proof that $\bot$ is a left and right unit for $\vee$.
    Then we have $\lambda n. \top : \N \to \bool$. This map clearly respects the operations, so we get an inhabitant
    of $\Map_S(T,T')$. But this morphism does not respect the units, so it cannot come from a morphism
    in $\Map_{S'}(T,T')$.
\end{example}


This discussion leaves open an interesting question.
\begin{question}
    Given a signature $S$ and a propositional extension $S'$, are there useful necessary and sufficient conditions for
    the forgetful map $\Map_{S'}(T,T') \to \Map_S(T,T')$ to be an equivalence for every pair of $S'$-algebras
    $T$ and $T'$?
\end{question}

\subsection*{Initiality of HIITs}
\label{sec:intiality-hiits}

Our original goal was to complete the conjectured result from
\cite{pinyo-types} and formally verify that $\Z_h$ is a set. Using the
strategy from this paper this is fairly straightforward: we can show
that the natural notion of morphism of $\Z_h$-algebras satisfies a
principle analogous to \cref{inductiontoalgebra}, and hence that
$\Z_h$ is a set. When attempting to formalize this
construction we hit an unexpected problem: it turns out that it is rather
difficult to verify that the higher inductive type defining $\Z_h$ is initial in its corresponding
wild category of algebras. Specifically, the proof seems to require the construction of a
filler for a $4$-dimensional cube which is rather laborious. In \cite{qiits-popl} it
is shown that for QIITs (i.e., set-truncated HIITs) elimination and
initiality are equivalent, but the extension to higher dimensional HIITs
seems non-trivial. In particular it may require developing the higher
order categorical structure of the category of algebras.

\paragraph{Acknowledgements}
  The first author would like to thank Paolo Capriotti, Nicolai Kraus
  and Gun Pinyo for many interesting discussions on the subject of
  this paper. Both authors would like to thank Christian Sattler for
  comments and useful discussions. The work by and with Ambrus Kaposi
  and Andr\'as Kov\'acs plays an important role in particular in
  connection with the open questions triggered by this paper.

\bibliographystyle{alpha}
\bibliography{int-lics}

\begin{thebibliography}{BGvdW17}

\bibitem[AK16]{TTinTT}
Thorsten Altenkirch and Ambrus Kaposi.
\newblock Type theory in type theory using quotient inductive types.
\newblock {\em SIGPLAN Not.}, 51(1):18--29, January 2016.

\bibitem[AMV19]{cubical-agda}
Andreas Abel, Anders M\"{o}rtberg, and Andrea Vezzosi.
\newblock Cubical agda: A dependently typed programming language with
  univalence and higher inductive types.
\newblock {\em To appear in ICFP}, 2019.

\bibitem[AP18]{pinyo-types}
Thorsten Altenkirch and Gun Pinyo.
\newblock Integers as a higher inductive type.
\newblock {\em TYPES}, 2018.

\bibitem[AS19]{code}
Thorsten Altenkirch and Luis Scoccola.
\newblock Implementation of the integers as a higher inductive type.
\newblock \url{https://github.com/LuisScoccola/integers-as-HITs}, 2019.

\bibitem[BGvdW17]{BasoldEtAl}
Henning Basold, Herman Geuvers, and Niels van~der Weide.
\newblock Higher inductive types in programming.
\newblock {\em J. UCS}, 23:63--88, 2017.

\bibitem[Cav18]{cavallo}
Evan Cavallo.
\newblock biinv-int.
\newblock
  \url{https://github.com/RedPRL/redtt/blob/master/library/cool/biinv-int.red},
  2018.

\bibitem[CCHM18]{cubical}
Cyril Cohen, Thierry Coquand, Simon Huber, and Anders M{\"o}rtberg.
\newblock {Cubical Type Theory: A Constructive Interpretation of the Univalence
  Axiom}.
\newblock In Tarmo Uustalu, editor, {\em 21st International Conference on Types
  for Proofs and Programs (TYPES 2015)}, volume~69 of {\em Leibniz
  International Proceedings in Informatics (LIPIcs)}, pages 5:1--5:34,
  Dagstuhl, Germany, 2018. Schloss Dagstuhl--Leibniz-Zentrum fuer Informatik.

\bibitem[CHM18]{cubical-hits}
Thierry Coquand, Simon Huber, and Anders M\"{o}rtberg.
\newblock On higher inductive types in cubical type theory.
\newblock In {\em Proceedings of the 33rd Annual ACM/IEEE Symposium on Logic in
  Computer Science}, LICS '18, pages 255--264, New York, NY, USA, 2018. ACM.

\bibitem[KA18]{free-group}
Nicolai Kraus and Thorsten Altenkirch.
\newblock Free higher groups in homotopy type theory.
\newblock In {\em Proceedings of the 33rd Annual {ACM/IEEE} Symposium on Logic
  in Computer Science, {LICS} 2018, Oxford, UK, July 09-12, 2018}, pages
  599--608, 2018.

\bibitem[Kap19]{ambrus-agda}
Ambrus Kaposi.
\newblock Separate definition of constructors?
\newblock Agda mailing ist, May 2019.

\bibitem[KK19]{ambrus-andras}
Ambrus Kaposi and Andr{\'{a}}s Kov{\'{a}}cs.
\newblock Signatures and induction principles for higher inductive-inductive
  types.
\newblock {\em CoRR}, abs/1902.00297, 2019.

\bibitem[KKA19]{qiits-popl}
Ambrus Kaposi, Andr{\'{a}}s Kov{\'{a}}cs, and Thorsten Altenkirch.
\newblock Constructing quotient inductive-inductive types.
\newblock {\em {PACMPL}}, 3({POPL}):2:1--2:24, 2019.

\bibitem[Kv19]{nicolai-path-19}
Nicolai {Kraus} and Jacob {von Raumer}.
\newblock Path spaces of higher inductive types in homotopy type theory.
\newblock In {\em 2019 34th Annual ACM/IEEE Symposium on Logic in Computer
  Science (LICS)}, pages 1--13, 2019.

\bibitem[LB15]{licata-brunerie}
D.~R. {Licata} and G.~{Brunerie}.
\newblock A cubical approach to synthetic homotopy theory.
\newblock In {\em 2015 30th Annual ACM/IEEE Symposium on Logic in Computer
  Science}, pages 92--103, 2015.

\bibitem[Li15]{nuo-phd}
Nuo Li.
\newblock {\em Quotient types in type theory}.
\newblock PhD thesis, University of Nottingham, 2015.

\bibitem[red19]{redtt}
Redtt.
\newblock \url{https://github.com/RedPRL/redtt}, 2019.

\bibitem[RLS13]{licatashulman}
Daniel R.~Licata and Michael Shulman.
\newblock Calculating the fundamental group of the circle in homotopy type
  theory.
\newblock pages 223--232, 06 2013.

\bibitem[{Uni}13]{hottbook}
The {Univalent Foundations Program}.
\newblock {\em Homotopy Type Theory: Univalent Foundations of Mathematics}.
\newblock Institute for Advanced Study, Princeton, NJ, 2013.

\end{thebibliography}

\end{document}